\definecolor{darkgreen}{rgb}{0,0.5,0}
\definecolor{darkblue}{rgb}{0,0,0.8}
\definecolor{darkred}{rgb}{0.8,0,0}
\newtheorem{definition}{Definition}[section]
\newtheorem{lemma}[definition]{Lemma}
\newtheorem{theorem}[definition]{Theorem}
\newtheorem{observation}[definition]{Observation}
\newcommand{\bigo}{\mathcal{O}}
\newcommand{\ignore}[1]{}
\newcommand{\N}{\mathbb N}
\newcommand{\diam}{\mathrm{diam}}
\newcommand{\RS}{\mathsf{RS}}
\newcommand{\lab}{\mathsf{label}}
\title{\textbf{Hardness of exact distance queries in sparse graphs\\ through hub labeling}}
\author[1]{\Large Adrian Kosowski}
\author[2]{Przemys\l{}aw~Uzna\'nski}
\author[1]{Laurent Viennot}
\affil[1]{Inria, Paris University, France\footnote{Supported by Irif CNRS laboratory and ANR projects DESCARTES (ANR-16-CE40-0023), DISTANCIA (ANR-17-CE40-0015) and MULTIMOD (ANR-17-CE22-0016).}}
\affil[2]{Institute of Computer Science, University of Wrocław, Poland\footnote{Supported by Polish National Science Centre grant 2016/22/E/ST6/00499.}}
\date{}
\begin{document}
\maketitle

\thispagestyle{empty}

\begin{abstract}
A \emph{distance labeling scheme} is an assignment of bit-labels to the vertices of an undirected, unweighted graph such that the distance between any pair of vertices can be decoded solely from their labels. An important class of distance labeling schemes is that of \emph{hub labelings}, where a node $v \in G$ stores its distance to the so-called hubs $S_v \subseteq V$, chosen so that for any $u,v \in V$ there is $w \in S_u \cap S_v$ belonging to some shortest $uv$ path. Notice that for most existing graph classes, the best distance labelling constructions existing use at some point a hub labeling scheme at least as a key building block. 

Our interest lies in hub labelings of sparse graphs, i.e., those with $|E(G)| = \bigo(n)$, for which we show a lowerbound of $\frac{n}{2^{\bigo(\sqrt{\log n})}}$ for the average size of the hubsets. Additionally, we show a hub-labeling construction for sparse graphs of average size $\bigo(\frac{n}{\RS(n)^{c}})$ for some $0 < c < 1$, where $\RS(n)$ is the so-called Ruzsa-Szemer{\'e}di function, linked to structure of induced matchings in dense graphs. This implies that further improving the lower bound on hub labeling size to $\frac{n}{2^{(\log n)^{o(1)}}}$ would require a breakthrough in the study of lower bounds on $\RS(n)$, which have resisted substantial improvement in the last 70 years.

For general distance labeling of sparse graphs, we show a lowerbound of $\frac{1}{2^{\Theta(\sqrt{\log n})}}\textsc{SumIndex}(n)$, where $\textsc{SumIndex}(n)$ is the communication complexity of the Sum-Index problem over $Z_n$. Our results suggest that the best achievable hub-label size and distance-label size in sparse graphs may be $\Theta(\frac{n}{2^{(\log n)^c}})$ for some $0<c < 1$.

\end{abstract}

\section{Introduction}
Hub labeling schemes are a popular way of distributed encoding shortest path structure for easy retrieval. Given a graph $G = (V,E)$ (across this paper we assume, unless stated otherwise, that $G$ is undirected and unweighted, with $|V| = n$ and $|E|=m$), we store with each vertex $v \in V(G)$ the so-called hubset $S_v \subseteq V(G)$ together with shortest path distances between $v$ and its hubs. The distance query $uv$ is resolved by returning $$\min_{w \in S(u)\cap S(v)} \textrm{dist}(u,w) + \textrm{dist}(w,v).$$
This framework was introduced in \cite{Cohen:2003:RDQ:942270.944300} under the name of 2-hop covers, further explored by \cite{Abraham11onapproximate}) (as landmark labelings) and \cite{Abraham:2012:HHL:2404160.2404164}. The computed distance between all pairs of nodes $u$ and $v$ is exact if set $S(u)\cap S(v)$ contains at least one node on some shortest $u-v$ path. This property of the family of sets $(S(u) : u\in V)$ is known as \emph{shortest path cover}. 

Hub labeling schemes are a special case of the more general \emph{distance labeling} schemes,
in which the task is the assignment of a binary string $\lab(u)$ to each node $u\in V$, so that the graph distance between $u$ and $v$ is uniquely determined by the pair of labels: $\lab(u)$ and $\lab(v)$, using a decoding function specified as part of the distance labeling. A central question in the distributed computing literature consists in identifying, for a given graph class, the optimal size of a distance labeling, expressed in terms on a bound on the average or maximum size of the labels used, taken over all nodes of a graph, for graphs belonging to the considered graph class.

In general graphs, the distance labeling problem was first investigated by Graham and Pollak~\cite{pollak}, who provided the first labeling scheme with labels of size $\bigo(n)$. A subsequent line of research contributed to reducing the \emph{decoding time}, with results of \cite{Gavoille:2004:DLG:1036161.1036165, WP11} and finally \cite{DBLP:conf/soda/AlstrupGHP16}  presented a scheme with labels of size $\frac{\log_2 3}{2} n + o(n)$ and $\bigo(1)$ decoding time. This result is asymptotically tight: by a simple counting argument, the label size of a node has to be at least $\frac{1}{2} n - O(1)$, regardless of decoding time~\cite{Gavoille:2004:DLG:1036161.1036165}. 

By contrast, when considering classes of \emph{sparse graphs} (where $m = \bigo(n)$), no similar asymptotically tight bounds on distance label size are known. The first sublinear-space distance labeling schemes were proposed only recently: \cite{Sublinear} and later \cite{DBLP:conf/wdag/GawrychowskiKU16}  presented a hub labeling scheme with hubs of size $\bigo(\frac{n}{\log n} \log \log n)$ and $\bigo(\frac{n}{\log n})$, respectively. Interpreted in terms of distance labeling schemes, these works provide $\bigo(\frac{n}{\log n} (\log \log n)^2)$ and $\bigo(\frac{n}{\log n} \log \log n)$ bits per label respectively, through some careful encoding of distances to vertices of hubsets.
This leaves a substantial gap with respect to the best known lower bound of $\Omega(\sqrt{n})$ for the bit size of distance labels and of hub labeling schemes per node, given in \cite{Gavoille:2004:DLG:1036161.1036165}. 

The objective of this paper is to help understand the source of hardness of distance labeling in sparse graphs. We show a basic obstacle on the path to the construction of better distance labelings: a lower bound of  $\frac{n}{2^{\bigo(\sqrt{\log n})}}$ on the hub set size in hub labeling. As we discuss further on in the related work section, this is an issue as all currently known approaches to distance labeling rely inherently (possibly implicitly) on the construction of some form of hub labels.
More rigorously, we succeed in tying-in the distance label size to two long-standing open questions: one from combinatorics on the value of the so-called Ruzsa-Szemerédi function $\RS(n)$ on graphs, one from communication complexity on the bit complexity $\textsc{SumIndex}(n)$ of a basic 3-party communication problem known as Sum-Index. 

We remark that our results can perhaps help to shed further light on the question of designing (centralized) distance oracles for sparse graphs. A natural objective, so far unachieved in general, would be to obtain for $n$-node sparse graphs a spectrum of data structures, using space $S$ and resolving exact distance queries in time $T$, with a time-space tradeoff of $ST = \tilde O(n^2)$. This tradeoff is trivially achieved at its endpoints ($S = \tilde O(n)$, $S = \tilde O(n^2)$), but the existence of such oracles appears open, e.g., for $S = \tilde O(n^{3/2})$ and $T = O(n^{1/2})$ (cf.~\cite{DBLP:journals/corr/abs-1006-1117Cohen,sommer2009distance}). Our result precludes the existence of such a centralized oracle relying on an application of hub labeling. 

\subsection{Related work}

\paragraph{Distance labelings vs. Hub labelings.}

It is noteworthy that for most existing graph classes, the best known distance labelling constructions are based on hub labeling schemes, either explicitly or as a key building block. Such constructions usually involve some form of compression and/or encoding of all distances (from a vertex to its hubs), to avoid $\log n$ overhead when going from hubsets to binary labels (see e.g.~\cite{DBLP:conf/wdag/GawrychowskiKU16}).

For example, for the case of arbitrary graphs, a distance labeling may be constructed as follows: an additive approximation scheme for hub-labeling is constructed, that is for each pair $uv$, there is $w \in S(u) \cap S(v)$ such that either $w$  or some neighbor $x \in N(w)$ is on shortest $uv$ path. This guarantees that the absolute error of estimation is either $0,1$ or $2$. Constructing such (small) approximate hub-set and complementing it with explicit correction tables (which require $\frac{\log_2 3}{2} n$ bits per vertex) suffices. Many more ingredients and insights are required to achieve constant decoding time as described in~\cite{DBLP:conf/soda/AlstrupGHP16}, which goes beyond the scope of this overview.

\paragraph{Distance labeling of sparse graphs and sub-classes.}

Constructions for distance labeling in sparse graphs were considered in the previously mentioned papers \cite{Sublinear} and  \cite{DBLP:conf/wdag/GawrychowskiKU16}. Both of them use hubsets as the underlying technique. The first of these constructions relies on the observation that, by selecting randomly a hubset $S$ of size $\bigo(\frac{n}{D} \log D)$ and assigning $S \subseteq S(u)$ for any $u$, one already covers almost all pairs of vertices which are at distance $D$ to each other, except at most a $1/D$ fraction of these pairs. Selecting $D = \Theta(\log n)$ and storing vertices closer than $D$ as hubs explicitly leads to desired hubset sizes (however, a careful approach is required when dealing with large degree vertices which may appear in a sparse graph, guaranteed only to have constant average degree). Second work shaves $\log \log n$ term by derandomizing the construction. 

For the class of trees, the constructions of \cite{DBLP:journals/jgt/Peleg00} and \cite{DBLP:conf/icalp/AlstrupGHP16} are based on  selection of central vertices as hubs, and proceeding recursively on obtained subtrees. In terms of bit size, those constructions give $\Theta(\log^2 n)$ bits per label (which is asymptotically optimal due to lowerbound of \cite{Gavoille:2004:DLG:1036161.1036165}), which corresponds to $\log n$ hubs per vertex. The construction of \cite{DBLP:conf/podc/FreedmanGNW17} through very careful assignment of pieces of information thorough the construction achieves  $\frac{1}{4} \log^2 n + o(\log^2 n)$ bits per label, which is optimal up to lower order terms due to the lowerbound of \cite{DBLP:conf/icalp/AlstrupGHP16}. 

For planar graphs, the main technical ingredient is an existence of small size separators. Specifically, in any planar graph there is a $S \subseteq V$ such that $|S| = \bigo(\sqrt{n})$ and $S$ separates $V$ into $V_1$ and $V_2$ that are balanced (up to a factor 2) in sizes. Taking advantage of this, \cite{Gavoille:2004:DLG:1036161.1036165} described $\bigo(\sqrt{n})$ hub labeling and $\bigo(\sqrt{n} \log n)$ bits distance labeling schemes, by applying the separation recursively. They also present  $\Omega(\sqrt[3]{n})$ lower bound (for both hub and distance labeling schemes). The distance labeling size was later improved in \cite{DBLP:journals/corr/GawrychowskiU16} to $\bigo(\sqrt{n})$, through better encoding of distances from vertex to its hub (and thus avoiding $\log n$ overhead).

\paragraph{Lower bounds.}
The aforementioned lowerbounds on size of distance labels and/or hubsets for sparse graphs, and especially for planar graphs, follow a particular line of thought, which can be called a \emph{counting argument}. One can construct a family of graphs $\mathcal{F}$ on $n$ vertices within the considered graph class, with a preselected subset of vertices $V' \subseteq V$, such that knowing $|V'|^2$ pairwise distances in graph $G$ is enough to identify the graph $G \in \mathcal{F}$. One can reason then that the total length of the $|V'|$ labels in some graph in $\mathcal{F}$ needs to be at least $\log_2 |\mathcal{F}|$, thus giving a lowerbound of $\frac{1}{|V'|} \log_2 |\mathcal{F}|$ bits per label (the same reasoning applies to hub-labeling schemes as well). This reasoning has one fundamental limitation, in that it is unable to separate \emph{a distributed data structure} from \emph{a centralized} one. For example, it is believed that unweighted planar graphs do in fact admit $\Omega(\sqrt{n})$ lowerbound for size of distance labels (as is the case for weighted planar graphs). Current lower bound construction of \cite{Gavoille:2004:DLG:1036161.1036165} uses counting technique construction with $|V'| = \Theta(n^{1/3})$ and $\log_2|\mathcal{F}| = \Theta(n^{2/3})$.  This cannot be improved with counting technique: \cite{DBLP:conf/soda/AbboudGMW18} shows that planar graphs do in fact admit \emph{distance oracles} which take in total $\bigo(\min{|V'|^2, \sqrt{n |V'|}})$ bits to encode all $|V'|^2$ pairwise distances, which is $\bigo(\min{|V'|, \sqrt{n/|V'|}})$ bits per vertex and is maximized when $|V'| = n^{1/3}$ (one needs to argue that counting technique is not able to separate distance oracles from distance labelings, which is the case). This motivates the search for alternative techniques when proving lower bounds.

\paragraph{Hub labeling in practice.} We remark that the hub-based method of distance computation is efficient for many real-world networks (even in centralised scenarios, as a distance oracle) for at least two reasons. First of all, for transportation-type networks it is possible to show bounds on the sizes of sets $S$, which follow from the network structure. Notably, Abraham et al.~\cite{doi:10.1137/1.9781611973075.64} introduce the notion of highway dimension $h$ of a network, which is presumed to be a small constant e.g.\ for road networks, and show that an appropriate cover of all shortest paths in the graph can be achieved using sets $S$ of size $\widetilde \bigo(h)$. Moreover, the order in which elements of sets $S(u)$ and $S(v)$ is browsed when performing the minimum operation is relevant, and in some schemes, the operation can be interrupted once it is certain that the minimum has been found, before probing all elements of the set. This is the principle of numerous heuristics for the exact shortest-path problem, such as contraction hierarchies and algorithms with arc flags~\cite{Kohler06fastpoint-to-point,Bauer:2010:SFR:1498698.1537599}.

\subsection{Overview of our results and proof techniques}

In this work we first show a lower bound on the size of hub labelings in sparse graphs. Note that in the regimes of both distance labeling and hub labeling, the state-of-the-art results for sparse graphs were leaving a huge gap: while upper bounds are of the form $\bigo(n/\log n)$ at best (ignoring poly-loglog terms), the lower bounds are $\Omega(\sqrt{n})$. We close this gap for hub labeling with the following result.

\newcommand{\ourlowerbound}{
Graphs of max-degree $3$ require average hub size to be at least $\frac{n}{2^{\Theta(\sqrt{\log n})}}$, where $n$ is the number of vertices.
}
\begin{theorem}
\label{th:ourlowerbound}
\ourlowerbound
\end{theorem}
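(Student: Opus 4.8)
The plan is to build, for every $n$, a max-degree-$3$ graph on $n$ vertices in which some constant fraction of the vertices are forced to carry hubsets of size $n / 2^{\Theta(\sqrt{\log n})}$. The natural starting point is a dense "hard instance" for hub labeling --- a bipartite-like gadget in which many pairs of vertices are at distance $2$ and the only shortest-path witnesses available are the common neighbors, so that the shortest-path-cover condition forces large intersections of hubsets. Concretely, I would take a graph encoding a Ruzsa--Szemer\'edi-type configuration (a graph on $N$ vertices that is the edge-disjoint union of $\RS(N)$ large induced matchings): the induced-matching structure is exactly what makes it impossible to cover all the relevant distance-$2$ pairs cheaply, which is the combinatorial core behind the $2^{\Theta(\sqrt{\log n})}$ factor (this is the same quantity that governs the known constructions of dense graphs with no efficient covering, where $\RS(N) = N^{1 - o(1)}$ but $\RS(N) = N/2^{\Theta(\sqrt{\log N})}$ is the best lower bound). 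The first step is therefore to make the reduction in the ``easy'' direction: assuming a hub labeling with small average hubset size in the sparse graph we construct, derive a covering of the dense configuration that beats the $\RS$-type bound, contradiction.

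The second, more technical step is the \emph{sparsification}: the RS gadget itself has high degree, so I would replace each high-degree vertex by a bounded-degree tree (a balanced binary tree of depth $O(\log n)$ whose leaves are the original incidences), which brings the maximum degree down to $3$ while only blowing up the vertex count by a polylogarithmic factor and perturbing all relevant distances by an additive $O(\log n)$. One then has to check that the shortest-path structure is preserved faithfully enough: the ``witness'' vertices in the original gadget become the roots (or internal vertices) of these trees, and a hub placed at such a root still certifies the distance between the two endpoints of a matching edge. This bookkeeping --- keeping track of exactly how distances shift under the tree-replacement, and arguing that a cheap hub labeling of the sparse graph still yields a cheap cover of the dense gadget after one factors out the polylog overhead --- is where I expect the main difficulty to lie, because the additive $O(\log n)$ distortion must not be allowed to destroy the shortest-path-cover property, and one must be careful that hubs ``inside'' a replacement tree cannot be used to cheat (they lie on shortest paths only within a controlled set of pairs).

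Once the reduction is set up, the final step is the arithmetic: if the sparse graph has $n$ vertices (so the dense gadget had $N = n / \mathrm{polylog}(n)$ vertices) and average hubset size $\sigma$, the covering argument gives a cover of the gadget of total size roughly $n \sigma \cdot \mathrm{polylog}(n)$, which by the RS lower bound must be at least $N \cdot \RS(N) = N^2 / 2^{\Theta(\sqrt{\log N})}$; rearranging yields $\sigma \geq n / 2^{\Theta(\sqrt{\log n})}$, absorbing the polylog factors into the $2^{\Theta(\sqrt{\log n})}$ term since $\mathrm{polylog}(n) = 2^{O(\log\log n)} = 2^{o(\sqrt{\log n})}$. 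The one subtlety to watch here is that we need a \emph{lower} bound on hubset size, so the relevant input is a dense graph with \emph{many} distance-$2$ pairs and few covering possibilities --- i.e.\ we use the existence of RS graphs with $\RS(N)$ induced matchings each of linear size, and the fact that covering all matching edges simultaneously is expensive; the ``breakthrough'' remark in the abstract then follows because an improvement of the lower bound past $n/2^{(\log n)^{o(1)}}$ would, run through the same reduction in reverse (the construction of the companion upper bound in the paper), force $\RS(N) < N / 2^{(\log N)^{o(1)}}$, contradicting the current state of knowledge.
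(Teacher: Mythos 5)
Your high-level intuitions---that the combinatorics behind the $2^{\Theta(\sqrt{\log n})}$ factor is Behrend/Ruzsa--Szemer\'edi-type, and that one must eventually sparsify a high-degree gadget to max-degree $3$ using trees---are both correct and both appear in the paper. But the proposal as written has two genuine gaps that would sink the proof.

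First, you have the role of the $\RS$ function backwards, and your numbers are off. You write ``$\RS(N) = N/2^{\Theta(\sqrt{\log N})}$ is the best lower bound'' and then invoke this as the quantitative engine of the lower bound. In fact $\RS(N)$ is a \emph{slowly growing} function: the best known bounds are $2^{\Omega(\log^* N)} \leq \RS(N) \leq 2^{O(\sqrt{\log N})}$. Running the argument through the genuine lower bound $\RS(N) \geq 2^{\Omega(\log^* N)}$ would give you something far weaker than $n/2^{\Theta(\sqrt{\log n})}$, and relying on the unproven belief that $\RS(N) = 2^{\Theta(\sqrt{\log N})}$ would make the theorem conditional. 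But Theorem~\ref{th:ourlowerbound} is unconditional. The paper does not \emph{reduce from} an $\RS$ lower bound; it builds an explicit Behrend-flavoured graph family $G_{b,\ell}$ --- a layered grid with $2\ell+1$ levels, each level indexed by $[0,s-1]^\ell$ with $s=2^b$, edges between consecutive levels changing a single coordinate, and integer weights $A + (\text{coordinate change})^2$ engineered so that shortest paths from level $0$ to level $2\ell$ are unique and pass through the midpoint vertex $v_{\ell,(\vec x + \vec z)/2}$. The $2^{\Theta(\sqrt{\log n})}$ term arises from balancing the parameters $b$ and $\ell$ at $\sqrt{\log N}$, not from any $\RS$ bound.

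Second, the naive ``sparsify an RS gadget'' plan has a fatal counting problem. A dense $\RS$-type graph on $N$ vertices has $m = N^2/2^{\Theta(\sqrt{\log N})}$ edges, and any bounded-degree simulation of it necessarily has $n = \Omega(m)$ vertices. Your covering argument forces roughly one hub per matching edge, i.e.\ total hub cost $\Omega(m)$ --- but spread over $n = \Theta(m)$ vertices this gives average hubset size $\Theta(1)$, not $n/2^{\Theta(\sqrt{\log n})}$. The paper escapes this by not materializing the dense gadget directly: instead the dense RS-type bipartite structure is realized as the set of shortest paths between the extremal levels $V_0$ and $V_{2\ell}$, while the underlying sparse graph has only $O(\ell \cdot s^\ell)$ vertices per level times a polylogarithmic blow-up from the tree/path gadgets. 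The ``depth'' $\ell$ is what makes the hub cost outpace the vertex count.

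Finally, there is a technical idea you do not mention at all, and without it the counting does not close: the monotonization trick. An arbitrary hub labeling need not place the midpoint $v_{\ell,\vec y}$ in either $S_{v_{0,\vec x}}$ or $S_{v_{2\ell,\vec z}}$ --- a hub elsewhere on the shortest path would suffice. The paper replaces each $S_v$ by the vertex set $S^*_v$ of the minimal subtree of a fixed shortest-path tree rooted at $v$ containing $S_v$; this costs at most a factor of $\diam(G_{b,\ell})$ (which is absorbed into the $2^{\Theta(b+\ell)}$ slack), and after monotonization the midpoint really is forced into one of the two hubsets. You would need some such step before ``count forced hubs and divide''.

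So: the flavour is right, the tree-sparsification detail is essentially right, but the lower bound cannot be obtained by reducing from $\RS$ lower bounds, and the actual construction plus the monotonization argument are the heart of the proof and are missing.
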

(See Section~\ref{sec:lb} for a detailed proof.)

This result is complemented by an upper bound on hub set size, given by Theorem~\ref{th:ourupperbound}.

Our technical contributions are obtained by using the following observations as a starting point.
First, let us fix $D = n^{o(1)}$ to be a threshold value, such that we consider distances up to $D$ as small, and large otherwise. Simply using random hubsets of size roughly $n/D$ we can cover all $uv$ pairs with $uv$-distance being at least $D$. We therefore see immediately that only small distances are crucial (regardless of whether we are constructing a hard instance for a lowerbound, or small hubsets).

Next, we observe that in the regime of small distances, we can afford to fix our attention to \emph{monotone} hubsets, where we require that for any $u$, if $x \in S(u)$, then all vertices on some chosen shortest $ux$ path belong to $S(u)$ as well. It trivially follows that minimal  \emph{monotone} hubset covering all small distances is withing a factor of $D$ in size from minimal arbitrary hubset. Monotone hubsets have a following advantage in analysis: for $uv$ connected by an \emph{unique} shortest path, for any $x$ belonging to such path, either $u$ or $v$ has to ``pay'' for hub $x$, that is $x \in S(u)$ or $x \in S(v)$. Thus, looking for hard instances, it is advantageous to consider the following: fix $u \in V$, and consider set $S'(u) = \{ h : \textrm{dist}(u,h) = \textrm{dist}(h,v) = D/2 \}$ where $v \in V$ iterates over vertices such that $\textrm{dist}(u,v) = D$. If we ensure that $S'(u)$ need to be almost-linear in size (on average), since $\sum |S'(u)|$ is within a $D$ factor from total size of hubsets, we have our desired hard instance for hubset covering.

\paragraph{Induced matchings in dense graphs.}
It turns out that the connection between structure of shortest paths and induced matchings is at the heart of the hardness of the hub labeling problem in sparse graphs. This connection as such has already been noted in the literature (cf.~\cite{bodwin17}); here, we provide a brief exposition from a perspective most relevant to our study. Let us first recall the terminology. 

\begin{definition}
We say that $M \subseteq E(G)$ is an induced matching of $G$ if $(i)$ it is a matching and $(ii)$ there is a set of vertices $V' \subseteq V(G)$ such that $M$ is a subgraph of $G$ induced by $V'$.
\end{definition}
The connection between induced matchings and the shortest path structure follows from considering the bipartite graph $G' = (V,V,E')$, where $(u,v) \in E$ iff $u$ and $v$ are at distance at most $D$ in $G$, for some fixed threshold value $D$. Now consider a quadruple of vertices $u,v,u',v'$ such that $(u,v) \in G'$ and $(u',v') \in G'$, $\textrm{dist}(u,v) = \textrm{dist}(u',v') \le D$. Take now some  hub candidate $h \in V$ and some integers $a,b$ such that $a+b = \textrm{dist}(u,v)$. If $h$ is on the unique shortest $uv$ path and $h \in S(u) \cap S(v)$ and $\textrm{dist}(u,h) = a$, $\textrm{dist}(h,v) = b$, we can mark it by selecting $uv$ as an edge in some subgraph $M_h \subseteq E'$. We do the same for $u'v'$ and every other pair of vertices, adding edges to $M_h$ if $h$ is a hub on shortest path located at distance $a$ from one endpoint and $b$ from other.
It is now an easy observation that e.g. $\textrm{dist}(u,v') \le a+b$, and if the distance is indeed $a+b$, pair $u,v'$ is already covered by hub $h$. Since $h \in S(u)$ and $h \in S(v')$ was already amortized by paying for edges $uv$ and $u'v'$ in $M_h$, there is no need for adding edge $uv'$, and we can ensure that $M_h$ is indeed an induced matching (repeating the reasoning for $u'v$ pair).

We are thus interested in structure of (linear in $n$) induced matchings in (potentially dense) graphs. Specifically, when looking for hubset constructions, it is advantageous for us to have an upper bound on number of induced matchings graph can be edge partitioned into. Similarly, when looking for a lowerbound construction, we are interested in an explicit construction of dense graphs edge partitioned into large induced matchings, and moreover in having such graphs realisable as a structure of unique shortest paths. The first of these questions has been studied extensively in the combinatorics literature.

The Ruzsa-Szemer{\'e}di function $\RS(n)$ is a graph-theoretic function, defined as follows.
\begin{definition}
An undirected, unweighted graph $G$ on $n$ vertices whose edges can be partitioned into at most $n$ induced matchings is called a Ruzsa-Szemer{\'e}di graph.\footnote{In the literature there exists term of $(r,t)$-Ruzsa-Szemer{\'e}di graph, where edges are partitioned into $t$ induced matchings of size $r$. We follow the convention of \cite{bodwin17} and drop the indices.} Then $\RS(n)$ is the largest value such that every Ruzsa-Szemer{\'e}di graph has at most $n^2/\RS(n)$ edges.
\end{definition}
The study of values of $\RS(n)$ was initiated by \cite{ruzsa1978triple} who showed that $\RS(n) = \omega(1)$. Current bounds state that
$$2^{\Omega(\log^*n)} \leq \RS(n)  \leq 2^{\bigo(\sqrt{\log n})}$$
due to results of \cite{fox2011new} and \cite{behrend1946sets} (see \cite{DBLP:conf/soda/Elkin10} for current best upperbound up to lower-order terms). $\RS(n)$ finds applications in algorithms and complexity theory, e.g. communication over a shared channel \cite{DBLP:journals/tit/BirkLM93}, PCP Theorem \cite{DBLP:journals/rsa/HastadW03}, property testing \cite{DBLP:journals/jcss/AlonS04} and streaming algorithms \cite{DBLP:conf/soda/GoelKK12,DBLP:conf/soda/Kapralov13,DBLP:conf/esa/Konrad15}.
We note that another work to exploit the connections between $\RS(n)$ and the structure of shortest paths in graphs was  \cite{bodwin17}, although this was done for the different (distantly related) problem of constructing \emph{distance preservers}. Whereas the lite-motif of noting connections between sets of shortest paths and induced matchings is inherent to both~\cite{bodwin17} and our work, the combinatorial arguments used in the respective constructions are significantly different.

Using a number of combinatorial insights and properties of the hub labeling problem, we show the following result.

\newcommand{\ourupperbound}{
Any graph $G = (V,E)$ on $n$ vertices and $\bigo(n)$ edges admits a hub labeling $\{S_v\}$ of average size $\frac{1}{n} \sum_v |S(v)| = \bigo(\frac{n}{\RS(n)^{1/c}})$ for some constant $c \le 7$.
}
\begin{theorem}
\label{th:ourupperbound}
\ourupperbound
\end{theorem}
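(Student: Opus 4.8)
The plan is to separate distance scales, reduce to bounded degree, and then run the induced-matching / Ruzsa--Szemer\'edi dictionary sketched before the statement in the ``hubset construction'' direction. Fix a threshold $D$, to be tuned at the end as a small fixed power of $\RS(n)$ (so $D = 2^{\bigo(\sqrt{\log n})} = n^{o(1)}$). First I would dispose of all pairs at distance more than $D$ by the standard sampling argument: a uniformly random set $R$ of $\Theta(\tfrac{n}{D}\log n)$ vertices meets a vertex of every shortest path of length $\ge D$ with high probability, so placing $R$ in every hubset covers all such pairs at per-vertex cost $\widetilde{\bigo}(n/D)$. Next, since $|E| = \bigo(n)$, fewer than $\bigo(n/D)$ vertices have degree exceeding $D$; adding all of them to every hubset costs another $\bigo(n/D)$ per vertex and covers every pair whose (consistently tie-broken) canonical shortest path passes through a high-degree vertex. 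What remains is to cover, in the bounded-degree graph obtained by deleting the high-degree vertices, all pairs at distance at most $D$ whose canonical path avoids them, and I would further restrict to \emph{monotone} hubsets, i.e.\ $S(u)$ is a subtree of the shortest-path tree of $u$ containing $u$; this makes ``inserting a hub'' mean ``inserting a whole path-prefix'' at a cost blow-up of at most $D$.

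For the remaining small-distance instance the heart of the matter is a combinatorial bound, in terms of $\RS$, on the minimum total hubset size. For each length $\ell \le D$ and a suitably chosen offset $a$ (ultimately a single $a$ per $\ell$), look at the bipartite ``pivot'' graph on $V \times V$ whose edge $(u,v)$ records both that $\mathrm{dist}(u,v) = \ell$ and that a prescribed vertex $h$ is the $a$-th vertex on the canonical $u$--$v$ path. As observed before the statement, once $h$ is inserted into $S(x)$ for every endpoint $x$ incident to $h$ in this graph, every ``rectangle'' pair $(u,v')$ that happens to lie at distance exactly $\ell$ over such an $h$ is covered for free; discarding these redundant pairs, what is left associated to each $h$ may be assumed to be an induced matching of the reachability graph. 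Across all pivots, the pairs that still need a dedicated hub therefore form, for each $(\ell,a)$, a subgraph on $2n$ vertices that by construction decomposes into at most $n$ induced matchings; invoking the upper bound $\RS(n) \le 2^{\bigo(\sqrt{\log n})}$ in the form ``a graph on $N$ vertices that splits into $\le N$ induced matchings has at most $N^2/\RS(N)$ edges'', there are only $\bigo(n^2/\RS(n))$ such residual pairs per $(\ell,a)$. Each of these is then handled by one monotone path-prefix insertion, while the ``free'' rectangle coverage --- where a pivot at offset $a$ is pushed into the $\le D^{a}$ hubsets on its near side --- is affordable only as long as the offset is kept small, and balancing $a$ against $\ell - a$ is what controls its total cost.

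Summing the residual cost over the $\le D$ lengths, multiplying by the $\bigo(D)$ monotonicity overhead, and adding the $\widetilde{\bigo}(n/D)$ sampling-and-high-degree cost gives a per-vertex bound of the shape $\widetilde{\bigo}\!\bigl(\tfrac{n}{D} + \tfrac{D^{\bigo(1)}\, n}{\RS(n)}\bigr)$; choosing $D$ to be the power of $\RS(n)$ that equalises the two terms yields average hubset size $\bigo\!\bigl(n/\RS(n)^{1/c}\bigr)$, and keeping track of the losses --- the $\log n$ from sampling, the factor $D$ from monotonicity, and the polynomial-in-$D$ slack from the rectangle accounting --- pins $c$ down to a fixed constant no larger than $7$.

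I expect the main obstacle to be making the second step fully rigorous, namely turning the informal ``pivots cover rectangles for free, the rest is an induced matching'' picture into an honest global accounting that (i) treats all lengths and offsets simultaneously without double-charging, (ii) is genuinely compatible with monotonicity --- which is why one works with shortest-path subtrees and why the offset $a$ must be chosen so that no single pivot is forced into too many hubsets --- and (iii) extracts a real $\RS$-type sparsity bound from bounded degree instead of a vacuous one. The precise exponent $1/c$ is decided exactly here, and this is also the point at which either a sharper combinatorial argument or an improved lower bound on $\RS(n)$ would change the answer, in line with the paper's claim that improving Theorem~\ref{th:ourlowerbound} past $n/2^{(\log n)^{o(1)}}$ must go through progress on $\RS$.
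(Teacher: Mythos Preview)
Your high-level plan---random sampling for the ``large'' regime, then an induced-matching/$\RS$ argument for the ``small'' regime, balanced by a choice of $D$ as a power of $\RS(n)$---matches the paper. But the heart of the argument, the step you yourself flag as the obstacle, has a genuine gap and is \emph{not} just a matter of bookkeeping.

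The claim that, after ``discarding rectangle pairs'', the matching $M_h$ attached to each pivot $h$ is an induced matching of the reachability graph is false as stated. Take $(u,v),(u',v')\in M_h$ with $\mathrm{dist}(u,h)=a$ and $\mathrm{dist}(h,v')=\ell-a$; then $\mathrm{dist}(u,v')\le \ell$, and if equality holds the edge $(u,v')$ \emph{is} present in the distance-$\ell$ reachability graph. The fact that $h$ covers $(u,v')$ ``for free'' does not delete that edge from the graph in which you need $M_h$ to be induced, because other pivots $h'$ may still have $(u,v')$ in their matchings. Processing pivots iteratively does not save this either: one only gets that $M_{h}$ is induced in the residual graph \emph{after} $h$ is processed, which is not the hypothesis of the $\RS$ bound. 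The paper supplies the missing ingredient: a random $D^3$-colouring of $V$. Pairs $u,v$ with $|H_{uv}|\le D$ whose hub set $H_{uv}$ receives a repeated colour are charged separately (expected total $\le n^2/D$), and for the remaining ``rainbow'' pairs one checks that for each fixed colour $c$ and each $(a,b)$, the maximal matchings $MM^h_{a,b}$ over all $h$ of colour $c$ are induced in their union $G^c_{a,b}$---precisely because a cross-edge $(u,v')$ at distance $a+b$ would force a second colour-$c$ hub in $H_{u v'}$. This is what makes $G^c_{a,b}$ a Ruzsa--Szemer\'edi graph and yields $\sum_v |F_v|=\bigo(D^5 n^2/\RS(n))$.

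Two related points. First, the right threshold is on $|H_{uv}|$, not on $\mathrm{dist}(u,v)$: the colouring argument needs $|H_{uv}|\le D$, while the sampling step hits $H_{uv}$ whenever $|H_{uv}|\ge D$, regardless of distance. Second, the paper does \emph{not} fix a single offset $a$ per length. It iterates over \emph{all} $(a,b)$ with $a+b\le D$ and takes a vertex cover of each $E^h_{a,b}$; this guarantees that for every $h$ on the path, $h\in F_u$ or $h\in F_v$, and then a one-step neighbourhood $N(F_v)$ (affordable because $\Delta=\bigo(1)$) closes the gap at the transition edge. Your ``push the pivot into $\le D^a$ near-side hubsets and balance $a$ against $\ell-a$'' scheme does not give $h\in S(v)$ on the far side, and trying to do the same there costs $D^{\ell-a}$, which explodes. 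Finally, the reduction from $\bigo(n)$ edges to bounded max-degree in the paper is by splitting each vertex along a weight-$0$ path (so $|V'|=\bigo(m)$, $\Delta'=\bigo(1)$, and distances are preserved exactly), rather than by deleting high-degree vertices; your deletion approach can be made to work but you then have to argue separately that hubs found in the pruned graph are still valid in $G$.
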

(See Section~\ref{sec:upperbound} for a detailed proof.)

This result can be read twofold. First, it hints at the possibility that there are hub labeling schemes for sparse graphs with average hub size $\bigo(\frac{n}{2^{\log^c n}})$ for some $0<c<1$, if one is to believe that the true value of $\RS(n)$ is on the side of the currently best upperbound. Secondly, it gives a (conditional) lowerbound on our lowerbound technique: one cannot hope to significantly improve the bound from Theorem~\ref{th:ourlowerbound} without strengthening existing bounds on value of $\RS(n)$, which is a hard open problem.

The connection between induced matchings and the structure of shortest paths leads us to the missing ingredient necessary for proving Theorem~\ref{th:ourlowerbound}. Namely, in hard instances we are looking for, in addition to already discussed structure, we need the structure of shortest paths to realize some form of Ruzsa-Szemer{\'e}di structure, that is admit edge partion into preferably $\bigo(n/\RS(n))$ induced matchings of $n$ edges. Luckily, related ideas for such constructions were explored by \cite{alon12}. The construction presented there involves graph with vertex set $[C]^d$ for some constants $C,d$, and interpreting vertices as $d$-dimensional vectors, the rule for connecting with edge $\vec{x}$ and $\vec{y}$ iff $|\|\vec{x}-\vec{y}\|_2^2 - \mu| \le n$ for some chosen constant $\mu$. We tweak the construction by using $[C]^d \times [\ell]$ as our vertex set, interpreted as $\ell$ $d$-dimensional layers, and connecting with an edge only between neighbouring layers (edges go always between $i$-th and $(i+1)$-th layer for some $i$). We fix $\ell$ to be a small value being in the order of \emph{diameter} of the graph in the original construction - thus there is an equivalence between paths in the original graph of length $\ell$ and path in the new graph from first to last layer. Finally, we simplify the predicate deciding the existence of an edge (edges connect $\vec{x}$ and $\vec{y}$ if they differ in at most one coordinate), and ensure the uniqueness of a shortest paths by introducing weights over the edges of the graph.

\paragraph{Distance labelings and and the Sum-Index problem.}
Our graph family used in the proof of Theorem~\ref{th:ourlowerbound} has a very regular structure. Namely, one can distinguish three (large) layers of vertices $A,B,C \subset V$, such that each layer is labeled with vectors from a high-dimensional integer grid, and the following property holds: for $u_{\vec x} \in A$ and $w_{\vec z} \in C$, the unique shortest path intersects $B$ in a single vertex that is precisely $v_{(\vec x + \vec z)/2} \in B$. It is tempting to construct a family of (sparse) graphs, where shortest path length from vertices in $A$ to vertices in $C$ is sensitive to existence (or removal) of vertices in $B$. We can imagine this scenario as a game, where two players, one residing on $A$ side, and one residing on $C$ side of the graph, are evaluating in coordination a $A \times C \to \{0,1\}$ function. To our advantage, such a problem has been widely considered in \emph{communication complexity} in the 1990's.

\begin{definition}[Sum-Index Problem]
Let $S = S_0S_1\ldots S_{n-1} \in \{0,1\}^n$. Alice holds $S$ and $a \in [0,n-1]$. Bob holds $S$ and $b \in [0,n-1]$. They both simultaneously send a messages $M_a$ and $M_b$ to a referee, whose goal is to compute $f(M_a,M_b) = S_{(a+b)\bmod n}$.
\end{definition}

This problem has been first stated explicitly in \cite{DBLP:conf/icalp/Pudlak94}, as a single-bit-output ``extract'' of a $\{0,1\}^n \times [n] \to \{0,1\}^n$ \textsf{shift} function: $\textsf{shift}_k(x) = y$, where $y_i = x_{(i+k) \bmod n}$. Analysis of $\textsf{shift}$ is closely connected to a program of proving highly non-trivial lowerbounds in circuit complexity, c.f. \cite{DBLP:journals/cc/HastadG91} for a precise reduction. Informally, the goal is to prove super-linear lowerbounds on the size of circuits for some function $\{0,1\}^n \to \{0,1\}^n$, and $\textsf{shift}$ was considered a strong candidate. However, the result of \cite{DBLP:conf/icalp/Pudlak94} and later \cite{ambainis96} have shown that such direction is unfeasible (at least with $\textsf{shift}$), by showing existence of sub-linear communication complexity protocols for evaluation of Sum-Index. Those results were called ''unexpected'' upperbounds, and the best construction up-to-date is due to \cite{ambainis96} with $\textsc{SumIndex}(n)= \bigo(\frac{n \log^{0.25} n}{2^{\sqrt{\log n}}})$ bits complexity, where $\textsc{SumIndex}(n)$ denotes the exact bit complexity of the problem.
 It is also known that $\textsc{SumIndex}(n) = \Omega(\sqrt{n})$, cf. \cite{babai03,DBLP:conf/stacs/BabaiKL95,DBLP:journals/siamcomp/PudlakRS97,DBLP:journals/siamcomp/NisanW93}, and the precise complexity is still a major open problem.
 We prove the following reduction.
 \newcommand{\communicationcomplexity}{Distance labeling in graphs on $n$ vertices and max-degree $3$ requires at least $\frac{1}{2^{\Theta(\sqrt{\log n})}}\textsc{SumIndex}(n)$ bits per vertex.}
 \begin{theorem}
 \label{th:communication_complexity}
 \communicationcomplexity
\end{theorem}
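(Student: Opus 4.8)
The plan is to transfer a Sum-Index protocol of cost $C = \textsc{SumIndex}(n')$ into a distance labeling of size roughly $C \cdot 2^{\Theta(\sqrt{\log n})}$ on an $n$-vertex graph of max-degree $3$, by reversing the reduction: encode the string $S$ into the graph's topology, encode indices $a,b$ into the choice of a source vertex in $A$ and a sink vertex in $C$, and arrange that a single distance query $\mathrm{dist}(u_a, w_b)$ reveals $S_{(a+b) \bmod n'}$. The starting point is the layered grid family underlying Theorem~\ref{th:ourlowerbound}: recall it has three large ``marked'' sets $A,B,C$ with vertices indexed by a high-dimensional integer grid, and for $u_{\vec x}\in A$, $w_{\vec z}\in C$ the unique shortest path passes through $B$ exactly at $v_{(\vec x+\vec z)/2}$. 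I would first re-index so that the relevant midpoints $v_{(\vec x+\vec z)/2}$ are in bijection with $\{0,\dots,n'-1\}$ (collapsing the $d$-dimensional structure to a cyclic group $\mathbb{Z}_{n'}$ via the Behrend-type sumset argument that is already used to certify the Ruzsa--Szemer\'edi structure), with $n' = n / 2^{\Theta(\sqrt{\log n})}$ so that the grid has only $\mathrm{poly}(n')$-many vertices per ``coordinate layer'' and the whole graph stays on $n$ vertices with max-degree $3$ (subdividing edges and replacing high-degree hubs by binary trees as in the lower-bound construction).

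Next I would build the \emph{family} of graphs $\{G_S : S\in\{0,1\}^{n'}\}$: $G_S$ is obtained from the base graph by, for each $i$ with $S_i = 0$, locally perturbing the unique shortest path that would pass through the midpoint vertex $v_i$ — e.g.\ lengthening it by one (via an edge subdivision / detour gadget attached at $v_i$) so that $\mathrm{dist}_{G_S}(u_a,w_b)$ equals a fixed value $L$ when $S_{(a+b)\bmod n'}=1$ and $L+1$ when $S_{(a+b)\bmod n'}=0$. The crucial structural fact to verify is that these perturbations do not interfere: because the paths through distinct midpoints form an \emph{induced} matching in the associated bipartite ``reachability'' graph $G'$ (this is exactly the induced-matching property exploited in the paper's discussion), a detour at $v_i$ affects only the $\{(a,b):a+b\equiv i\}$ queries, and no shortest path is rerouted through a different midpoint. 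One must also check that the detour gadgets, being constant-size, keep $|V(G_S)| = (1+o(1))n$ and degree $\le 3$ uniformly over $S$.

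Then the reduction: suppose for contradiction that every max-degree-$3$ graph on $n$ vertices admits a distance labeling with maximum (or average, after a standard Markov-style amplification) label size $\le B$. Alice, holding $S$ and $a$, internally constructs $G_S$, computes the label $\lab(u_a)$, and sends it; Bob, holding $S$ and $b$, constructs the \emph{same} $G_S$ (both know $S$, so they agree on the graph with no communication) and sends $\lab(w_b)$; the referee runs the decoding function on the two labels to recover $\mathrm{dist}_{G_S}(u_a,w_b)\in\{L,L+1\}$ and outputs $L+1 - \mathrm{dist} = S_{(a+b)\bmod n'}$. This is a valid simultaneous-message protocol for Sum-Index on strings of length $n'$ of cost $2B$, so $2B \ge \textsc{SumIndex}(n')$, and since $\textsc{SumIndex}$ is (up to the known polylog factors in Ambainis's bound) robust to the substitution $n' = n/2^{\Theta(\sqrt{\log n})}$ — one checks $\textsc{SumIndex}(n/2^{\Theta(\sqrt{\log n})}) \ge \textsc{SumIndex}(n)/2^{\Theta(\sqrt{\log n})}$ using any of the standard monotonicity/padding properties — we conclude $B \ge \textsc{SumIndex}(n)/2^{\Theta(\sqrt{\log n})}$.

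The main obstacle I anticipate is the \emph{non-interference} step: guaranteeing that for \emph{every} $S$ simultaneously, the detour gadgets placed at all midpoints with $S_i=0$ leave the $A$-to-$C$ distances equal to $L$ or $L+1$ with no second-order effects — i.e.\ that after perturbation the shortest $u_a$--$w_b$ path still uses midpoint $v_{(a+b)\bmod n'}$ (or a detoured copy of it) rather than some now-shorter alternative route, and that the uniqueness/length bookkeeping survives the edge-weight-to-unweighted transformation (replacing weights by paths) used to make the graph unweighted and max-degree $3$. This is precisely where the induced-matching property and the rigidity of the Behrend-grid construction must be invoked carefully; the communication-complexity argument itself is then a short and standard simultaneous-message simulation.
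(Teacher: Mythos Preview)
Your high-level plan matches the paper's proof: both encode $S$ into the topology of the layered grid graph $G_{b,\ell}$, have Alice and Bob (who both know $S$) build the same graph and send the distance labels of $v_{0,2\vec x}$ and $v_{2\ell,2\vec z}$, and let the referee read off $S_{(a+b)\bmod n'}$ from the decoded distance. The communication-complexity wrapper and the parameter choice $b=\ell=\sqrt{\log N}$ are identical.

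Where you diverge is in the \emph{encoding of $S$}. You propose attaching a ``$+1$ detour gadget'' at each midpoint $v_i$ with $S_i=0$ and then worry (rightly) about non-interference: could a $+1$ bump reroute a shortest path through a different midpoint? The paper sidesteps this entirely by simply \emph{deleting} the vertex $v_{\ell,\vec y}$ from layer $V_\ell$ whenever $S_{\mathrm{repr}(\vec y)}=0$. Then Lemma~\ref{lem:pathlen} (unique shortest path through the midpoint) gives the decoding for free: if the midpoint is present the distance equals the known value $L(\vec x,\vec z)$, and if it is absent the distance is strictly larger; removing \emph{other} midpoints cannot affect this path since it meets $V_\ell$ only at $(\vec x+\vec z)/2$. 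So the ``main obstacle'' you flag disappears with the cleaner encoding, and no induced-matching argument is needed here. (Your gadget approach would also go through, since the proof of Lemma~\ref{lem:pathlen} shows the gap to the second-shortest path is at least $2$, but you would have to verify this explicitly.)

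Two smaller points. First, the ``Behrend-type sumset argument'' is a red herring for this reduction: the paper maps integers to grid vectors by plain base-$(s/2)$ representation, $\mathrm{repr}(\vec x)=\sum_i x_i (s/2)^i \bmod m$, and uses $2\vec x, 2\vec z$ so that parity is automatic and $\mathrm{repr}(\vec x+\vec z)\equiv a+b$. No additive-combinatorics is needed. Second, the step $\textsc{SumIndex}(n/2^{\Theta(\sqrt{\log n})}) \ge \textsc{SumIndex}(n)/2^{\Theta(\sqrt{\log n})}$ that you invoke is indeed needed to get the theorem as stated; the paper glosses over it in the same way, so your treatment is no less rigorous than theirs.
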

(See Section~\ref{sec:distancelabeling} for a detailed proof.)

We note that our linkage between distance labeling size and communication complexity lowerbounds appears to be the first advancement of the techniques in the area going beyond the graph-counting technique of \cite{Gavoille:2004:DLG:1036161.1036165}, and actually using distribution of information as a source of hardness in labeling schemes (although links between communication complexity and related to ours topic of compact routing have been investigated in \cite{twigg2006compact}). One might hope that, with the advancement of communication complexity bounds, this approach will eventually result in non-trivial \emph{unconditional} lowerbounds for distance labelings.

\section{Lower bound on Hub Labeling}
\label{sec:lb}
This section is devoted to the proof of the following theorem.

\begin{theorem}
\label{th:lowerbound_family}
For any positive integers  $\ell \in \N$ (the \emph{number of levels}) and $b \in \N$ (the \emph{side length parameter}) there exists a graph $G_{b,\ell}$ such that:
\begin{itemize}
\item[(i)] $|V(G_{b,\ell})| = 2^{b\ell} \cdot 2^{\Theta(b + \log \ell)}$ (number of nodes).
\item[(ii)] $\Delta(G_{b,\ell}) = 3$ (maximum degree).
\item[(iii)] any hub labeling $\{S_v\}$ of $G_{b,\ell}$ satisfies: $\frac{1}{|V(G_{b,\ell})|}\sum_{v \in V(G_{b,\ell})} |S_v| \geq  2^{b\ell} \cdot 2^{-\Theta(b+\ell)}$ (average size of hub sets).
\end{itemize}
\end{theorem}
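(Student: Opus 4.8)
The plan is to exhibit an explicit $G_{b,\ell}$ built from a layered, weighted variant of the Behrend-type graphs of~\cite{alon12}, and then to check the three claimed properties. Concretely, I would start from a \emph{core} graph on vertex set $[N]^d\times\{0,1,\dots,\ell\}$, where $N=2^{\Theta(b)}$ and $d$ is taken comparable to $\ell$ so that $N^d=2^{b\ell}$ up to the allowed slack; the $\ell+1$ copies of the grid are the \emph{levels}. Between consecutive levels $i$ and $i+1$ I would add an edge from $(\vec x,i)$ to $(\vec y,i+1)$ precisely when $\vec x$ and $\vec y$ differ in at most one coordinate, weighting it by a strictly convex function of $\vec x-\vec y$ (of magnitude $2^{\Theta(b)}$), tuned as in~\cite{alon12} so that long-range shortest paths are forced to be ``straight''. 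I then make the graph unweighted by subdividing every edge into a path, and reduce the maximum degree to $3$ by replacing each vertex of the result with a balanced binary tree whose leaves are in bijection with its incident edges; I set $A$, $B$, $C$ to be the tree-expansions of levels $0$, $\ell/2$, $\ell$. With this construction, (ii) is immediate — balanced binary trees and subdivision paths have maximum degree $3$ and inter-tree edges touch only degree-$2$ leaves — and (i) is bookkeeping: the core has $N^d(\ell+1)=2^{b\ell}2^{\Theta(\log\ell)}$ vertices, subdivision multiplies this by the maximum edge weight $2^{\Theta(b)}$, and each degree-reduction tree contributes a factor $2^{\Theta(b+\log\ell)}$, all absorbed into $2^{b\ell}2^{\Theta(b+\log\ell)}$.

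The substantive step is the shortest-path structure, which I would establish by the usual convexity (power-mean / parallelogram-law) argument. One shows: (a) for $\vec x$ in level $0$ and $\vec z$ in level $\ell$ for which the move $(\vec z-\vec x)/\ell$ is admissible, the unique shortest core path between $(\vec x,0)$ and $(\vec z,\ell)$ takes that move $\ell$ times, has length $D_0$ depending only on $\|\vec z-\vec x\|$, and crosses level $\ell/2$ at the single vertex $((\vec x+\vec z)/2,\ell/2)$; (b) if a vertex $\vec p$ of level $\ell/2$ is fixed and $\vec x$ ranges over lattice points at a fixed distance $r$ from $\vec p$, then the paired endpoints $\vec z=2\vec p-\vec x$ satisfy $\mathrm{dist}((\vec x,0),(\vec z,\ell))=D_0$, whereas for $\vec x_i\ne\vec x_j$ the cross endpoints satisfy $\|\vec x_i-\vec z_j\|=\|(\vec x_i-\vec p)+(\vec x_j-\vec p)\|<2r=\|\vec x_i-\vec z_i\|$, so $(\vec x_i,0)$ and $(\vec z_j,\ell)$ are at distance \emph{strictly less} than $D_0$. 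Since subdivision and the degree-reduction gadget are deterministic, distance-monotone operations, after a uniform rescaling they turn $D_0$ into a single value $D$, preserve uniqueness of shortest paths, and preserve the bijection $(u,v)\mapsto h(u,v)\in B$ between distance-$D$ pairs of $A\times C$ and their midpoints; parameters are chosen so a genuine vertex sits at the half-way point.

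For (iii) I would count, using the monotone-hubset reduction from the overview. Let $E'\subseteq A\times C$ be the pairs at distance exactly $D$; by the previous step each has a unique shortest path, a well-defined midpoint $h(u,v)\in B$, and for every $h\in B$ the set $\{(u,v)\in E':h(u,v)=h\}$ is a matching (indeed an induced one), since the cross pairs are strictly closer than $D$. In any \emph{monotone} hub labeling, covering $(u,v)\in E'$ forces $h(u,v)\in S_u\cup S_v$ — a hub on the unique shortest path, together with monotonicity, places the midpoint into one of the two hubsets — so $h$ lies in at least $|\{(u,v)\in E':h(u,v)=h\}|$ distinct hubsets, whence $\sum_v|S_v|\ge\sum_{v\in A\cup C}|S_v\cap B|\ge|E'|$. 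Since an arbitrary hub labeling monotonizes at a cost of a factor $D=2^{\Theta(b+\log\ell)}$, every hub labeling has $\sum_v|S_v|\ge|E'|/D$. Finally one estimates $|E'|$: choosing the radius $r$ in Behrend fashion so that spheres about the vertices of level $\ell/2$ carry many lattice points, $|E'|=2^{2b\ell}\cdot2^{-\Theta(b+\ell)}$, and dividing by $|V|=2^{b\ell}2^{\Theta(b+\log\ell)}$ and by $D$ yields the claimed average hubset size $2^{b\ell}\cdot2^{-\Theta(b+\ell)}$.

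The hard part will be the construction itself: designing the edge set and weights so that simultaneously the core graph stays sparse (bounded-degree after the gadget), the distance-$D$ bipartite graph $E'$ is as dense as $2^{2b\ell-\Theta(b+\ell)}$, and the pair-set of every midpoint is an induced matching — equivalently, every cross pair is \emph{strictly} closer than $D$ and the ``straight'' moves are genuinely realizable in the bounded, discretized grid. This is precisely where the Behrend / sphere-packing numbers enter, and where a quantitative improvement would demand progress on lower bounds for $\RS(n)$. The remaining verification — that subdivision plus the binary-tree degree reduction neither destroys uniqueness of shortest paths nor shifts the midpoints — is routine but must be carried out with care.
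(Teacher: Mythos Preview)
Your overall architecture --- layered grid, convex edge weights, subdivide, attach binary trees, then lower-bound via monotone hubsets and a midpoint bijection --- matches the paper's. But the specific core graph you describe does not deliver the structural property on which everything hinges: a \emph{unique} shortest path from $(\vec x,0)$ to $(\vec z,\ell)$ that passes through $((\vec x+\vec z)/2,\,\ell/2)$.

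The inconsistency is between your edge rule and your shortest-path analysis. You connect $(\vec x,i)$ to $(\vec y,i+1)$ when they differ in \emph{at most one coordinate}, yet in~(a) you assert the shortest path ``takes the move $(\vec z-\vec x)/\ell$\, $\ell$ times'', i.e.\ a straight line. With one-coordinate-per-step edges, that move is only an edge when $\vec z-\vec x$ has a single nonzero coordinate, which kills almost all pairs. If instead $\vec x$ and $\vec z$ differ in all $d\approx\ell$ coordinates, any shortest path changes each coordinate once, in some order; all $d!$ orderings have identical cost under any weight depending only on the step vector, so shortest paths are far from unique, and none of them hits $(\vec x+\vec z)/2$ at level $\ell/2$ (after $\ell/2$ steps you have modified $\ell/2$ coordinates in full, not ``half of each''). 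Your step~(b) and the entire monotone-hubset count then have no footing.

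The paper's repair is to \emph{prescribe} which coordinate may change at each level and to use $2\ell+1$ levels rather than $\ell+1$: at step $i\le\ell$ only coordinate $i$ changes, and at step $i>\ell$ only coordinate $2\ell{+}1{-}i$. Each coordinate is thus touched exactly twice --- once going ``up'', once going ``down'' --- and the quadratic weight $A+\delta^2$ with large $A$ forces, coordinate by coordinate via strict convexity, the unique optimum $\delta_i=\delta_{2\ell+1-i}=(z_i-x_i)/2$. This is precisely what pins the midpoint at $v_{\ell,(\vec x+\vec z)/2}$ and makes the path unique. With that in hand the paper's counting is also more direct than your fixed-radius/Behrend detour: summing over \emph{all} even-difference triples $(\vec x,\vec y,\vec z)$ with $\vec y=(\vec x+\vec z)/2$, the bijection $(\vec x,\vec y)\mapsto\vec z=2\vec y-\vec x$ already makes the midpoint contributions distinct, giving $\sum_v|S^*_v|\ge s^\ell(s/2)^\ell=2^{2b\ell-\ell}$ with no pigeonhole on the radius required.
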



\begin{proof}

In order to construct $G_{b,\ell}$, we more conveniently describe a weighted graph with non-uniform length edges, $H_{b,\ell} = (V, E, w)$ with integer edge weights appropriately chosen from the range $w(e) \in [1,(3\ell+1) \cdot 2^{2b}]$,\footnote{Throughout the text, we use the integer range notation $[a,b]\equiv \{a,a+1,\ldots,b\}$ and $[a,b) \equiv [a,b-1]$, for $a,b \in \N$.
} for all $e\in E$.

In following we set $s = 2^b$ (the \emph{side length}). We define vertex set of $H_{b,\ell}$ as $V = \bigcup_{r = 0}^{2\ell} V_i$, where each \emph{level} $V_i$, $i\in [0,2\ell]$, satisfies $|V_i| = s^\ell$ and is identified with a set of $\ell$-dimensional vectors, $V_i = \{v_{i,\vec j} : \vec j \in [0,s-1]^\ell\}$. 

The edges of $H_{b,\ell}$ are given so that we put an edge between $v_{i, \vec j}$ and $v_{i+1, \vec j'}$  when $\vec j$ and $\vec j'$ differ at most on one coordinate $c$: that is  $j_k = j'_k$ for all $k \not= c$, where $c = i+1$ for $i < \ell$ and $c = 2\ell-i$ for $i \ge \ell$.
The weight of of such an edge is then given as $w(\{v_{i, \vec j}, v_{i+1, \vec j'}\}) = A + (j_c - j'_c)^2$ where $A = 3\ell s^2$.
Note that each node $v_{i, \vec j}$ has exactly $s=2^b$ neighbors in $V_{i+1}$ (if $i<2\ell$) and $s$ neighbors in $V_{i-1}$ (if $i > 0$). See Figure~\ref{fig:H22} for an illustration.

\begin{figure}
    \centering
    \includegraphics[width=0.7\textwidth]{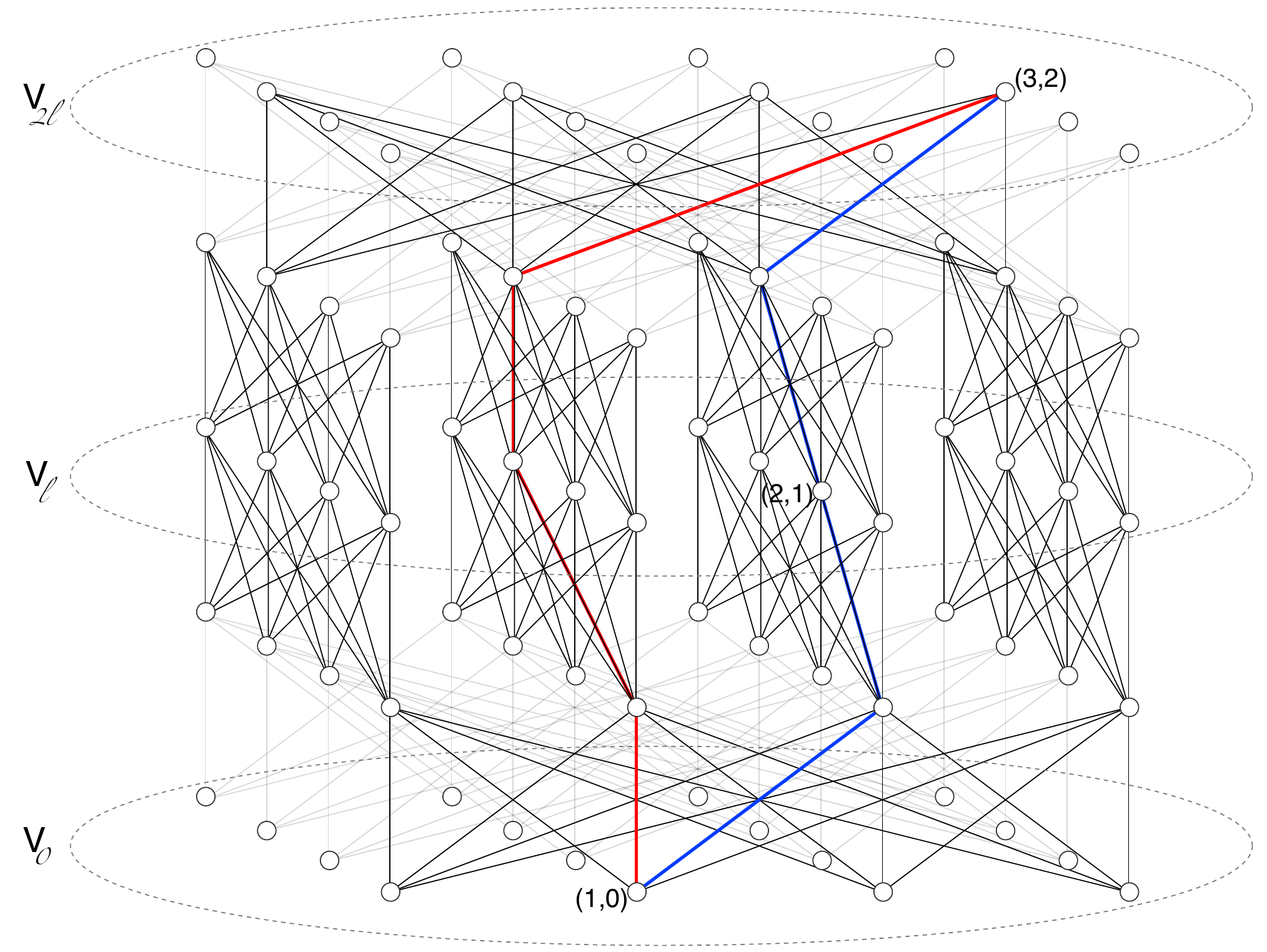}
    \caption{Graph $H_{b,\ell}$ with $b=2$ and $\ell=2$ ($s=4$). Some edges are drawn in light gray for better readability. The blue path is the only shortest path from $v_{0,(1,0)}$ to $v_{4,(3,2)}$. It passes through $v_{2,(2,1)}$ (which is a point of symmetry for the path) and has length $4A+4$. The red path has length $4A+8$.}
    \label{fig:H22}
\end{figure}

We convert $H_{b,\ell}$ into $G_{b,\ell}$ as follows.
\begin{itemize}
\item We associate each vertex $v$ of $H_{b,\ell}$ with two disjoint perfectly balanced binary trees $T^{\textrm{in}}_{v}$ and $T^{\textrm{out}}_{v}$ in $G_{b,\ell}$, both roots being linked to $v$. 
Each tree has $s$ leaves and depth $\log_2 s = b$. If $v \in V_i$, then leaves of $T^{\textrm{in}}_{v}$ are labeled as $v^{\textrm{in}}_u$, with $u \in V_{i-1}$, $\{u,v\} \in E(H_{b,\ell})$ ($T^{\textrm{in}}_{v}$ is omitted when $i=0$), and leaves of $T^{\textrm{out}}_{v}$ are labeled as $v^{\textrm{out}}_u$, with $u \in V_{i+1}$, $\{v,u\} \in E(H_{b,\ell})$ ($T^{\textrm{out}}_{v}$ is omitted when $i=2\ell$).
\item We associate each edge $e = \{u,v\}$ of $H_{b,\ell}$, $u \in V_i, v \in V_{i+1}$ of length $w(e)$ to a path of length $w(e)-2b+-2$ with $w(e)-2b-3$ auxiliary vertices in $G_{b,\ell}$, starting at $u^{\textrm{out}}_v$ and ending at $v^{\textrm{in}}_u$. Note that the path length satisfies $w(e)-2b+-2\ge 3\ell s^2-2b-2>0$ for $\ell\ge 1$ and $b\ge 1$, and that using $T^{\textrm{out}}_{u}$ and $T^{\textrm{in}}_{v}$, we obtain a path of length $w(\{u,v\})$ in $G_{b,\ell}$ for each edge $\{u,v\} \in E(H_{b,\ell})$.
\end{itemize}
We have $|V(G_{b,\ell})| \le |V(H_{b,\ell})| \cdot 4s + \sum_{e \in E(H_{b,\ell})} w(e) \le 4 s \cdot s^\ell \cdot (2\ell+1)  + (3\ell+1)s^2 \cdot s^\ell \cdot 2\ell \cdot s$, so the claims $(i)$ and $(ii)$ follow directly from the construction. To show claim $(iii)$, we first observe that the following property holds.

\begin{lemma}
\label{lem:pathlen}
Let $\vec x, \vec z \in [0,s-1]^\ell$ be such that for all $k \in [1,\ell]$, $z_k-x_k$ is even. Then there exists a unique shortest path in $H_{b,\ell}$ (and consequently also in $G_{b,\ell}$) between vertices $v_{0, \vec x}$ and $v_{2\ell, \vec z}$. Moreover, this shortest path passes through vertex $v_{\ell, (\vec x + \vec z)/2}$.
\end{lemma}
We defer the proof of the Lemma~\ref{lem:pathlen}.

Now, let $\{S_v\}$ be any fixed hub labeling of $G_{b,\ell}$. Fix arbitrarily shortest path trees $T_v$ rooted at each vertex $v$ of $G_{b,\ell}$, and let $S^*_v$ be the vertex set of the minimal subtree of $T_v$ rooted at $v$ which contains all vertices of $S_v$. We obviously have:
\begin{equation}\label{eq:star}
|S^*_v| \leq \diam(G_{b,\ell}) \cdot |S_v| \leq (3\ell + 1) s^2 \cdot 4\ell \cdot  |S_v|.
\end{equation}
Thus, the sets $S_v$ and $S^*_v$ are of equivalent size up to lower-order terms, and from now on, we focus on showing a lower bound on $\sum_{v\in V} S^*_v$. 

Consider all triplets $(\vec{x}, \vec{y}, \vec{z})$ such that $\vec{x},\vec{y},\vec{z} \in [0,s-1]^\ell$ and that $\vec{y} = (\vec{x}+\vec{z})/2$. There is $s^\ell \cdot (s/2)^\ell$ such triplets. Denote $x = v_{0,\vec{x}}$, $y = v_{\ell,\vec{y}}$ and $z = v_{2\ell,\vec{z}}$, then by Lemma~\ref{lem:pathlen} $y$ lies on the unique shortest $xz$ path, thus $y \in S^*_x$ or $y \in S^*_z$. Since $\vec{z} = 2\vec{x}-\vec{y}$, we have that value of $\vec{z}$ is uniquely determined by fixing $\vec{x}$ and $\vec{y}$, and similarly by $\vec{x} = 2\vec{z}-\vec{y}$, value of $\vec{x}$ is uniquely determined by fixing $\vec{y}$ and $\vec{z}$. Thus in each triplet $(\vec{x}, \vec{y}, \vec{z})$, vertex $y$ contributes by 1 to the size of $S^*_x$ or $S^*_z$. Thus $\sum_{v \in V} S^*_v \ge (s^\ell)^2 \cdot 2^{-\ell}$, and taking into account Eq.~\eqref{eq:star}, claim $(iii)$ follows.
\end{proof}

\begin{proof}[Proof of Lemma~\ref{lem:pathlen}]
Finally, we prove the Lemma~\ref{lem:pathlen}. We first consider the weighted graph $H_{b,\ell}$ only. Intuitively, we show that the unique shortest path in $H_{b,\ell}$ has the following structure: it climbs from level $0$ to level $\ell$ on its first $\ell$ edges, reaching vertex $v_{\ell, \vec y}$, and then climbs  next $\ell$ edges  to level $2\ell$; the portions of the path from $v_{0,\vec x}$ to $v_{\ell, \vec y}$ and from $v_{\ell, \vec y}$ to $v_{2\ell, \vec z}$ have a point symmetry with respect to $v_{\ell, \vec y}$. 

Let $d(v_{0, \vec x},v_{2\ell, \vec z})$ denote the weighted distance between the pair of nodes considered in the Lemma. 

Consider any path $P_{\vec x, \vec z} = (v_{\vec x} \equiv v_{0,\vec j^0}, v_{1,\vec j^1}, \ldots, v_{2\ell-1,\vec j^{2\ell-1}}, v_{2\ell,\vec j^{2\ell}} \equiv v_{\vec z})$ in $H_{b,\ell}$ on $2\ell$ edges. Denote $\delta_i = j^i_i-j^{i-1}_i$ and for $i \in [1,\ell]$ and $\delta_{i} = j^{i}_{2\ell-i+1} - j^{i-1}_{2\ell-i+1}$ for $i \in [\ell+1,2\ell]$.


The following properties hold:
\begin{itemize}
\item The weighted length of any such $P_{\vec x, \vec z}$ is:
$$
\sum_{i=1}^{2\ell}\left(A + \delta_i^2\right) \le 2\ell A + 2\ell s^2 < (2\ell+1)A.
$$
\item All $v_{0,\vec x} v_{2\ell,\vec z}$ paths in $H_{b,\ell}$ have at least $2\ell$ edges.
\item Any $v_{0,\vec x} v_{2\ell,\vec z}$ path in $H_{b,\ell}$ with at least $2\ell+1$ edges has weighted length at least $(2\ell+1)A$, which is greater than that of $P_{\vec x, \vec z}$.
\item $P_{\vec x, \vec z}$ satisfies $\delta_i + \delta_{2\ell+1-i} = z_i - x_i$. It thus follows that the total length $\sum_{i=1}^{\ell}\left(2A + \delta_i^2 + \delta_{2\ell+1-i}^2\right)$ is minimized when $\delta_i = \delta_{2\ell+1-i} = \frac{1}{2}(z_i-x_i)$, realized by an unique $v_{0,\vec x}v_{2\ell,\vec z}$ path having point symmetry at $v_{\ell,\vec y}$.
\end{itemize}
This finishes the proof for $H_{b,\ell}$. To complete the proof for $G_{b,\ell}$ we observe that for any $u\in V_i$ and $v\in V_j$ with $i<j$, we have $\textrm{dist}_{G_{b,\ell}}(u,v) = \textrm{dist}_{H_{b,\ell}}(u,v)$. This comes from the construction of $G_{b,\ell}$ for $j=i+1$ and from the fact that any $V_{i'}$ with $i <i'<j$ is a vertex cut in $G_{b,\ell}$ separating $u$ and $v$ for $j>i+1$: any shortest path $Q$ from $u$ to $v$ in $G_{b,\ell}$ must pass through $j-i-1$ vertices in $V_{i+1}\cup\cdots\cup V_{j-1}$ corresponding to a path $P$ in $G_{b,\ell}$ such that $|Q|=w(P)$. The shortest path from $v_{0,\vec{x}}$ to $v_{2\ell,\vec{z}}$ in $G_{b,\ell}$ is thus unique and passes through $v_{\ell,(\vec{x}+\vec{z})/2}$ similarly as the shortest path in $H_{b,\ell}$, which concludes the proof.
%
%
\qedhere
\end{proof}

Setting both $b = \ell = \sqrt{\log N}$ for some  appropriately chosen $N = n/2^{\Theta(\sqrt{\log n})}$ in Theorem~\ref{th:lowerbound_family} we finally obtain the main result of the Section. 

\mbox{}\newline\noindent\textbf{Theorem \ref*{th:ourlowerbound}.} 
\emph{
\ourlowerbound
}

\section{Lower bound for distance labeling}
\label{sec:distancelabeling}
We now show that distance labeling in sparse graphs is no easier than solving Sum-Index. We essentially use the graph construction from Theorem~\ref{th:ourlowerbound}, and the the fact that the distance between nodes $v_{0,\vec x}$ and $v_{2\ell, \vec z}$ is sensitive to the presence (or absence) in the graph of the node $v_{\ell, (\vec x + \vec z)/2}$. 

\mbox{}\newline\noindent\textbf{Theorem \ref*{th:communication_complexity}.}
\emph{
\communicationcomplexity
}

\begin{proof}
Let $S$ be the binary vector of length $m$ from Sum-Index Problem, with $m$ to be determined later. We describe the strategies of Alice and Bob of constructing the messages $M_a$ and $M_b$ respectively. Consider graph $G_{b,\ell}$ from Theorem~\ref{th:lowerbound_family}. We construct $G'_{b,\ell}$ by removing some vertices from layer $V_{\ell}$ (together with all adjacent edges) from $G_{b,\ell}$. We denote the choice of whether to include or remove particular vertex $v_{\ell,\vec{x}}$ as $W(\vec{x})$, and we defer how we decide those to later part of the proof.
\begin{observation}
\label{obs:decoding}
Let $\vec{x}$ and $\vec{z}$ be as in Lemma~\ref{lem:pathlen}, that is for any $k \in [1, \ell]$ $z_k - x_k$ is even. Then $W((\vec{x}+\vec{z})/2)$ can be decoded based only on $\vec{x}, \vec{z}$ and the length of the shortest path between $v_{0,\vec{x}}$ and $v_{2\ell,\vec{z}}$ in $G'_{b,\ell}$.
\end{observation}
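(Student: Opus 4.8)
The plan is to make the referee's decoding rule completely explicit: it is a single comparison of the reported distance against a threshold $L = L(\vec x, \vec z)$ that depends on $\vec x,\vec z$ only. Take $L(\vec x, \vec z) := \textrm{dist}_{G_{b,\ell}}(v_{0,\vec x}, v_{2\ell,\vec z})$, the distance in the \emph{un-pruned} graph. Since $G_{b,\ell}$ is a fixed graph independent of the choice $W$, this is indeed a function of $\vec x, \vec z$ alone, and by the end of the proof of Lemma~\ref{lem:pathlen} it equals the $H_{b,\ell}$-distance, namely $L(\vec x, \vec z) = 2\ell A + \tfrac12\sum_{k=1}^{\ell}(z_k-x_k)^2$ with $A = 3\ell s^2$ (an integer, since every $z_k - x_k$ is even). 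I will show that $W((\vec x+\vec z)/2) = 1$ holds exactly when $\textrm{dist}_{G'_{b,\ell}}(v_{0,\vec x}, v_{2\ell,\vec z}) = L(\vec x, \vec z)$; the Observation follows, as the decoder computes $L(\vec x, \vec z)$ from $\vec x,\vec z$ and is handed the distance in $G'_{b,\ell}$.

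First I would note one-sided monotonicity: $G'_{b,\ell}$ arises from $G_{b,\ell}$ by deleting a subset of the level $V_\ell$ together with incident edges, so $V(G'_{b,\ell}) \subseteq V(G_{b,\ell})$ and every $v_{0,\vec x}$--$v_{2\ell,\vec z}$ path of $G'_{b,\ell}$ is also such a path of $G_{b,\ell}$; hence $\textrm{dist}_{G'_{b,\ell}}(v_{0,\vec x}, v_{2\ell,\vec z}) \geq L(\vec x, \vec z)$, with the convention that $\infty \geq L(\vec x, \vec z)$. Now split on $W((\vec x+\vec z)/2)$. If it equals $1$, the vertex $v_{\ell,(\vec x+\vec z)/2}$ is retained; by Lemma~\ref{lem:pathlen} the unique shortest $v_{0,\vec x}$--$v_{2\ell,\vec z}$ path $Q^*$ of $G_{b,\ell}$ corresponds to a path of $H_{b,\ell}$ on exactly $2\ell$ edges, so exactly one of its vertices lies in $V_\ell$, and by the point-symmetry clause that vertex is $v_{\ell,(\vec x+\vec z)/2}$; all remaining vertices of $Q^*$ lie in levels $V_i$ with $i \ne \ell$ or are auxiliary tree- or path-vertices, none of which are deleted, so $Q^*$ survives in $G'_{b,\ell}$ and $\textrm{dist}_{G'_{b,\ell}}(v_{0,\vec x}, v_{2\ell,\vec z}) \leq L(\vec x, \vec z)$, forcing equality. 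If instead $W((\vec x+\vec z)/2) = 0$, then $v_{\ell,(\vec x+\vec z)/2}$ is absent, so any $v_{0,\vec x}$--$v_{2\ell,\vec z}$ path of $G'_{b,\ell}$ is a path of $G_{b,\ell}$ avoiding $v_{\ell,(\vec x+\vec z)/2}$, hence is not $Q^*$, hence has length strictly greater than $L(\vec x, \vec z)$ by the uniqueness clause of Lemma~\ref{lem:pathlen}; thus $\textrm{dist}_{G'_{b,\ell}}(v_{0,\vec x}, v_{2\ell,\vec z}) > L(\vec x, \vec z)$, where this includes the value $\infty$ when the two endpoints are disconnected in $G'_{b,\ell}$. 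Combining the two cases gives the desired equivalence.

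The only point that is not pure bookkeeping is the claim used in the $W = 1$ case: that $Q^*$ meets the deleted level $V_\ell$ in the single vertex $v_{\ell,(\vec x+\vec z)/2}$, so that keeping just this vertex is enough to preserve $Q^*$ no matter which \emph{other} vertices of $V_\ell$ were removed. This is exactly the ``$2\ell$-edge path crossing every level once'' structure already established inside the proof of Lemma~\ref{lem:pathlen}, so it needs no fresh argument; the rest is subgraph monotonicity together with the value of a distance that has already been computed.
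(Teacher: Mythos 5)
Your proof is correct and spells out exactly the argument the paper leaves implicit: the observation is stated without proof, being intended as an immediate consequence of Lemma~\ref{lem:pathlen}, namely that the unique shortest $v_{0,\vec x}$--$v_{2\ell,\vec z}$ path meets $V_\ell$ only in $v_{\ell,(\vec x+\vec z)/2}$, so the $G'_{b,\ell}$-distance equals the (computable) unpruned distance iff that vertex is retained. Your explicit threshold rule and the handling of the disconnected case are a faithful, slightly more detailed rendering of this same approach.
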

We also note that $G'$ has $2^{b \ell} \cdot 2^{\Theta(b+\log \ell)}$ vertices and maximum degree of $3$.

Recall we denoted $s = 2^b$ the side-length of a layer in $G$ (or $G'$).
We fix $m = (s/2)^\ell$. For a vector $\vec{x} \in [s]^\ell$ we denote $\textrm{repr}(\vec{x}) = \left(\sum_i x_i \cdot (s/2)^i\right) \bmod m$ as the integer value coming from treating coordinates of $\vec{x}$ as digits in $(s/2)$-ary representation. Observe that while $\textrm{repr}()$ is a bijection between $[0,s/2-1]^\ell$ and $[0,(s/2)^\ell-1]$, this is not the case for the whole space $[0,s-1]^\ell$ (in fact every value is in the image of $2^\ell$ vectors). We fix the predicate $W(\vec{x})$ as: $W(\vec{x}) \equiv [S_{\textrm{repr}(\vec{x})} = 1]$.

The protocol for Alice is as follow:
\begin{enumerate}
    \item Alice constructs graph $G'_{b,\ell}$ based on binary word $S \in \{0,1\}^m$.
    \item Alice constructs a \emph{distance labeling} for $G'_{b,\ell}$.
    \item Alice finds unique $\vec{x} \in [0,s/2-1]^\ell$ such that $\textrm{repr}(\vec{x}) = a$ (based on the $(s/2)$-ary representation of $a$).
    \item Alice sends to the referee the label of vertex $v_{0,2\vec{x}}$ together with integer $a$.
\end{enumerate}
Bob proceeds analogously, representing $b$ as vector $\vec{z}$, constructing the same distance labeling of $G'_{b,\ell}$ as Alice and sending the label of $v_{2\ell, 2\vec{z}}$ with integer $b$.
The referee is then able to compute distance between Alice's and Bob's vertices, reconstructs vectors $\vec{x}$ and $\vec{z}$, and by the Observation~\ref{obs:decoding} knows the bit $S_{\textrm{repr}(\vec{x}+\vec{z})} = S_{(a+b) \bmod m}$, which follows from $\textrm{repr}(\vec{x}+\vec{z}) = (\textrm{repr}(\vec{x})+\textrm{repr}(\vec{z}) ) \bmod m = (a+b) \bmod m$.

We thus reach the conclusion that distance labeling of sparse graph with $2^{b\ell} \cdot 2^{\Theta(b + \log \ell)}$ requires at least $\textsc{SumIndex}(2^{(b-1) \ell} ) - b\ell$ bits per label. Fixing $b = \ell = \sqrt{\log N}$ for some $N = \frac{n}{2^{\Theta(\sqrt{\log n})}}$ finishes the proof.
\end{proof}

\section{Upper bound}
\label{sec:upperbound}
In this section we show how to take advantage of a structure of induced matchings in dense graphs to construct hubsets. We start by considering a case of graphs of constant maxdegree.
\begin{theorem}
\label{th:upperbound}
Any graph $G = (V,E)$ on $n$ vertices and maxdegree $\Delta = \bigo(1)$ admits a hub labeling $\{S_v\}$ of total size $\sum_v |S_v| = \bigo(\frac{n^2}{\RS(n)^{1/c}})$ for some constant $c \le 7$.
\end{theorem}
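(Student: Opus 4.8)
The plan is to reduce the problem of building a small hub labeling to the combinatorial structure captured by $\RS(n)$, exploiting the connection sketched in the introduction between shortest-path structure and induced matchings. First I would reduce to the regime of small distances: fix a threshold $D = n^{o(1)}$ (to be tuned at the end), observe that a uniformly random hubset of size $\bigo(\frac{n}{D}\log D)$ added to every vertex covers all pairs at distance $\geq D$ (derandomize if an explicit construction is wanted), and that including in $S_v$ all vertices within distance $D$ of $v$ costs only $\bigo(\Delta^D) = n^{o(1)}$ per vertex when $\Delta = \bigo(1)$. So the total extra cost is $n^{1+o(1)}$, negligible against the target $n^2/\RS(n)^{1/c}$, and it remains to cover pairs $(u,v)$ with $\mathrm{dist}(u,v) = d$ for each fixed $d \leq D$ by a hubset covering only those pairs.

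Next, for a fixed distance value $d$ and a fixed split $a + b = d$, I would build the bipartite ``distance graph'' $G'_{a,b}$ on vertex classes $V \times V$ where $(u,v)$ is an edge iff $\mathrm{dist}(u,v) = d$ and there is a vertex $h$ with $\mathrm{dist}(u,h) = a$, $\mathrm{dist}(h,v) = b$ lying on a shortest $uv$-path (equivalently, $\mathrm{dist}(u,h)+\mathrm{dist}(h,v) = \mathrm{dist}(u,v)$). For each candidate hub $h$, as sketched before Definition~2.5, the set $M_h$ of such edges forms an induced matching of $G'_{a,b}$: it is a matching since in a bounded-degree graph $h$ has at most $\Delta^a$ vertices at distance exactly $a$ (one per edge endpoint, up to renaming) — here I would be careful and instead argue that after restricting to one shortest-path tree from $h$ the pairs $(u,v)$ realized by $h$ are in bijection with pairs of subtrees, making $M_h$ a matching; and it is induced because if $uv$ and $u'v'$ are in $M_h$ via $h$ then $\mathrm{dist}(u,v') \leq a+b = d$ and equality would mean $h$ already covers $(u,v')$, so that edge is redundant and can be removed from $G'_{a,b}$ without loss. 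Hence after a pruning step, $G'_{a,b}$ decomposes into induced matchings indexed by hubs, and the number of hubs we are forced to use to cover all its edges equals (up to the pruning) the minimum number of induced matchings needed to partition $E(G'_{a,b})$.

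Then I would invoke the $\RS(n)$ bound contrapositively: a graph on $N$ vertices whose edges partition into $t$ induced matchings — by a standard amplification/padding argument (blow up each matching, or iterate the Ruzsa–Szemerédi definition) — has at most roughly $N^2 t / \RS(N)$ edges when $t \leq N$, and more generally the bound $N^2/\RS(N)$ on RS-graphs upgrades to $\bigo(t \cdot N^2/\RS(N))$ for $t$ matchings by splitting into $\lceil t/N\rceil$ RS-graphs. Turning this around: a bipartite graph on $N + N = 2n$ vertices with $E$ edges needs at least $\Omega\!\left(\frac{E \cdot \RS(n)}{n^2}\right)$ induced matchings in any partition, but we want the reverse — an \emph{upper} bound on hubset size, i.e.\ a \emph{small} partition into induced matchings. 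So here the logic is: $G'_{a,b}$ has at most $n^2$ edges (one per pair), and we want to show it can always be covered by few induced matchings; this is exactly the statement that $G'_{a,b}$, being built from shortest paths, is "$\RS$-poor'', which should follow because if it needed many induced matchings it would, via the RS inequality, have to be sparse, contradiction — so it can be edge-partitioned into $\bigo(n^2/\RS(n))$... wait, that over-counts; the honest route is to greedily extract maximal induced matchings: each extracted matching either has $\Omega(n/\RS(n)^{1/c})$ edges (progress) or the remaining graph is already an RS-graph and has $\le n^2/\RS(n)$ edges total. Iterating, the number of extracted matchings (= hubs used, each charged once to $\sum_v |S_v|$ since a hub on a shortest $uv$-path is placed in $S_u$ or $S_v$) is $\bigo(n \cdot \RS(n)^{1/c})$ plus $n^2/\RS(n)$, and optimizing the two terms against each other (and against the $n^{1+o(1)}$ from large distances, summed over all $\bigo(D)$ values of $d$ and $\bigo(D)$ splits $a+b=d$) yields $\sum_v|S_v| = \bigo(n^2/\RS(n)^{1/c})$ for some $c \le 7$, the constant coming from the polynomial loss in the amplification steps (monotonicity reduction costing a factor $D$, the RS-graph padding, and the greedy extraction threshold).

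The main obstacle I expect is making the "induced matching $\Rightarrow$ redundant edges can be pruned while preserving a valid hub labeling'' argument globally consistent: pruning edge $uv'$ from $G'_{a,b}$ is only legitimate if $(u,v')$ is genuinely covered, which requires $\mathrm{dist}(u,v') = a+b$ exactly and $h$ on a shortest $uv'$-path — but across different distance values $d$ and different splits $(a,b)$ these prunings interact, and one must ensure that every pair $(u,v)$ is covered by \emph{some} $(d,a,b)$-round rather than being pruned in all of them. I would handle this by processing pairs in a fixed order and only pruning an edge once the hub responsible for it has been irrevocably committed to the labeling, so that coverage is monotone; and by absorbing the bounded-degree assumption ($\Delta = \bigo(1)$, so $\Delta^D = n^{o(1)}$) to keep the "small distance'' ball contribution negligible. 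The second delicate point is the exact exponent bookkeeping that turns the RS-graph inequality (partition into $\le N$ matchings) into a bound usable with $\bigo(D) = n^{o(1)}$ many auxiliary graphs each on a different vertex count, which is where the constant $c$ degrades from $1$ toward $7$; I would keep this as an explicit but routine optimization of a two- or three-term expression rather than chase the best constant.
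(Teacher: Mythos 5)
Your sketch captures the right high-level intuition (threshold $D$, random hubs for far pairs, induced matchings indexed by hubs, RS-sparsity) but it contains at least two concrete flaws that prevent the argument from closing, and the paper's actual route is quite different at the technical level.

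First, the arithmetic in your ``small-distance'' reduction is wrong. You propose to place every vertex within distance $D$ into $S_v$ at cost $\bigo(\Delta^D)$ per vertex and claim this is $n^{o(1)}$. But the threshold $D$ you ultimately need is $D \approx \RS(n)^{\Theta(1)}$, which can be as large as $2^{\Theta(\sqrt{\log n})}$. Then $\Delta^D = 3^{2^{\Theta(\sqrt{\log n})}}$ is super-polynomial (indeed far larger than $n$), not $n^{o(1)}$. So the ``store the whole radius-$D$ ball'' step fails completely; you cannot afford it. The paper never stores balls. Instead it separates by the \emph{number of valid hubs} $|H_{uv}|$ rather than the distance itself, covers pairs with $|H_{uv}|\geq D$ by a random set, absorbs the $\bigo(n^2/D)$ exceptions into explicit $Q_v$ sets, and handles the remaining pairs via a coloring-plus-vertex-cover argument.

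Second, your use of $\RS(n)$ points in the wrong direction, and you notice it yourself mid-sketch. The RS definition says an $n$-vertex graph whose edges partition into at most $n$ induced matchings has at most $n^2/\RS(n)$ edges. This is a \emph{sparsity} bound, not a decomposition tool: it does not let you greedily extract a large induced matching, nor does it guarantee few induced matchings suffice. Your fallback (``greedily extract maximal induced matchings; each either is large or the remainder is an RS-graph'') is not justified by the definition — being an RS-graph is a hypothesis about an existing partition, not a conclusion about the residual graph. The paper uses the sparsity bound in the correct direction: it builds, for each split $a+b\le D$ and each of $D^3$ random colors $c$, a bipartite graph $G^c_{a,b}$ whose edges are already partitioned into induced matchings $MM^h_{a,b}$ indexed by same-colored hubs $h$. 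By RS this graph has $\bigo(n^2/\RS(n))$ edges, and since a minimum vertex cover of $E^h_{a,b}$ is within a factor $2$ of $|MM^h_{a,b}|$, summing over all $(a,b,c)$ gives $\sum_v|F_v| = \bigo(D^5 n^2/\RS(n))$ \emph{as an upper bound on the constructed label size}. No pruning is needed.

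Third, the ``induced'' property in your sketch rests on an ad hoc pruning step whose global consistency you flag as the main obstacle — and indeed it is, because prunings across different $(a,b)$ and $d$ interact and could leave a pair uncovered in every round. The paper sidesteps this entirely via the $D^3$-coloring: it shows that with high probability all of $H_{uv}$ gets distinct colors, so two same-colored hubs can never both lie on a common shortest path, which is exactly what is needed for the $MM^h_{a,b}$ with $c_h=c$ to be \emph{induced} matchings in $G^c_{a,b}$ — a clean structural fact with no ordering or pruning. Pairs that fail the coloring are absorbed into $R_v$ at cost $\bigo(n^2/D)$. Finally, the coverage argument goes through an induction along the shortest path and a neighborhood blow-up ($N(F_v)$, costing only a $\Delta+1$ factor). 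Your plan would also need a coverage argument of this kind; it is missing.

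In short: the flavor is right (threshold, RS, induced matchings indexed by hubs), but your concrete plan has a fatal arithmetic error (ball storage), misuses the RS bound as a decomposition result, and leaves the consistency of the pruning step unresolved — which is precisely the part the paper replaces with the vertex-cover plus $D^3$-coloring machinery.
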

\begin{proof}
Let $D$ be a parameter $1 \le D \le \RS(n)$ to be fixed later. For any pair of vertices $u,v$ we denote $H_{uv}$ the set of all valid hubs for this pair, that is $H_{uv} \equiv \{x : \textrm{dist}(u,x)+\textrm{dist}(x,v) = \textrm{dist}(u,v)\}$.

We start by formulating the following property for a set $S \subseteq V$: $(\ast)$ for two vertices $u,v$ such that $|H_{uv}| \ge D$, we have $S \cap H_{uv} \not= \emptyset$. There exists a set $S$ such that $|S| = \bigo(\frac{n}{D} \log D)$ and $(\ast)$ is satisfied for all pairs $uv$ except $\frac{n^2}{D}$ many.
Existence of such set can proven by probabilistic method (c.f. \cite{BCE05}). Indeed, pick such set uniformly at random with $|S| = \frac{n}{D} \ln D$. Any pair $u,v$ with at least $D$ hubs is \emph{not covered} with probability at most $(1-D/n)^{|S|} \le 1/D$, thus the expected number of uncovered pairs is at most $n^2/D$, and there is a selection of $S$ with at most that many pairs not covered. From now on we focus on such $S$, and we denote by $\{Q_v\}$ a family of sets such that if a pair $uv$ is not covered by $S$ and satisfies $H_{uv} \ge D$, then we put $v \in Q_u$. By the property of $S$, we have $\sum_{v \in V} |Q_v| \le n^2/D$.

We now color all vertices of $G$ with $D^3$ colors, so that for each $v \in V$, the color $c_v \in [1,D^3]$ is assigned independently and uniformly at random.  Consider the following event: for a pair $u,v$, each vertex from $H_{uv}$ is assigned a different color. If $|H_{uv}| \le D$, then such an event happens with probability at least $1-1/D$, and the opposite event with probability at most $1/D$. If we define sets $\{R_u\} = \{ v : \text{ there are } x,y \in H_{u,v} \text{ such that } c_x = c_y \}$, then by simple computation of expected value, there is a choice of colors $c_v$ such that $\sum_v |R_v| \le n^2/D$. Thus we can afford for each $v$ to store as hubs $R_v$, the vertices where coloring of potential hubs failed to assign unique colors.

We now deal with $u,v$ such that $|H_{uv}| \le D$ and $H_{uv}$ was properly colored using different colors.
First, we iterate through $a,b \ge 0$ such that $1 \le a+b \le D$ and iterate $h \in V$. Consider bipartite graph $(V,V,E_{a,b}^h)$,  $E_{a,b}^h \subseteq V \times V$. For $u,v$ we put $(u,v) \in E_{a,b}^h$ if the following conditions hold: $(i)$ $|H_{uv}| \le D$, $(ii)$ each vertex of $H_{uv}$ was colored using distinct color, $(iii)$ $h \in H_{uv}$, $\textrm{dist}(u,h) = a$ and $\textrm{dist}(h,v) = b$. 
We now use the following Lemma, with proof provided later:

\begin{lemma}
\label{lem:rsbound}
Construct $\{F_v\}$ as follow: $a,b$ iterate so that $1 \le a+b \le D$ and $h$ iterate over $V$, and consider some minimum vertex cover $(V_1,V_2), V_1,V_2 \subseteq V$ of $(V,V,E_{a,b}^h)$.  If $v \in V_1 \cup V_2$, we put $h$ into $F_v$. We then have $\sum_v |F_v| = \bigo(D^5 \frac{n^2}{\RS(n)}).$
\end{lemma}

For $X \subseteq V$, let $N(X) = \{ v : \exists_{x \in X}  \textrm{dist}(v,x) \le 1 \}$ denote neighborhood. 
We observe, that for any two vertices $u,v$, one of the following holds:
\begin{enumerate}
\item If $H_{uv} \ge D$, then either there exists $h \in S$ such that $h \in H_{uv}$ is a hub for $u,v$, or $v \in Q_u$ is a hub for $u,v$.
\item If $H_{uv} \le D$ and there is a color conflict in $H_{uv}$, then $v \in R_u$ is a hub for $u,v$.
\item If $H_{uv} \le D$ and there is no color conflict in $H_{uv}$, then for any $h \in H_{uv}$, there are $a = \textrm{dist}(u,h)$ and $b = \textrm{dist}(h,v)$ such that $(u,v) \in E_{a,b}^h$. Thus at least one of $u,v$ is in the corresponding vertex cover, and so $h \in F_u$ or $h \in F_v$. Since this holds for any $h \in H_{uv}$, and w.l.o.g. $u \in F_u$, $v \in F_v$, by induction along a $uv$-shortest path there is an edge $(x,y) \in E$ such that $x \in F_u$ and $y \in F_v$ and $x,y \in H_{uv}$, guaranteeing that $N(F_u) \cap N(F_v) \not= \emptyset$.
\end{enumerate}
We then set each vertex hubset $H_v = S \cup Q_v \cup R_v \cup N(F_v)$. A bound on size follows
$\sum_v |H_v| \le n \cdot |S| + \sum_v |Q_v| + \sum_v |R_v| + (\Delta+1) \sum_v |F_v| \le \bigo(\frac{n^2}{D} \log D) + \bigo(\frac{n^2}{D}) + \bigo(D^5 \frac{n^2}{\RS(n)})$ using Lemma~\ref{lem:rsbound}. Setting $D = \RS(n)^{1/6}$ completes the proof.

\end{proof}

\begin{proof}[Proof of Lemma~\ref{lem:rsbound}.]
Fix values $a,b$ and $h$ and vertex cover $VC_{a,b}^h$ used in that iteration. By relation between maximum matching and minimum vertex cover, any maximal matching $MM_{a,b}^h \subseteq E_{a,b}^h$ satisfies $|VC_{a,b}^h| \le 2 |MM_{a,b}^h|$. For each color $c$, consider bipartite graph $G_{a,b}^c = \bigcup_{h : c_h = c} MM_{a,b}^h$. We now show that $MM_{a,b}^h$ is in fact an induced matching in $G_{a,b}^c$. It is enough to show the following:  for any two distinct $(u_1,v_1), (u_2,v_2) \in MM_{a,b}^h$, we have $(u_1,v_2) \not\in G_{a,b}^c$ and $(u_2,v_1) \not\in G_{a,b}^c$. Assume otherwise, that $(u_1,v_2) \in MM_{a,b}^{h'}$ for some $h' \not= h$ (we know we can exclude $(u_1,v_2) \in MM_{a,b}^{h}$ due to $MM_{a,b}^{h}$ being a matching). We know that $\textrm{dist}(u_1,h) = a$ and $\textrm{dist}(h,v_2) = b$, and each vertex of $H_{u_1v_2}$ was colored using a different color, thus either $\textrm{dist}(u_1,v_2) < a+b$ or $c_{h'} \not= c_h$, a contradiction.

This shows that $G^c_{a,b}$ is a Ruzsa-Szemer{\'e}di graph (on $2n$ vertices), with edge partition into induced matchings $MM_{a,b}^h$ for $h : c_h = c$. It thus follows that $|G^c_{a,b} |\le \frac{(2n)^2}{\RS(2 n)} = \bigo(\frac{n^2}{\RS(n)})$.\footnote{It is straightforward to show $\RS(2n) = \Theta(\RS(n))$.} We then obtain the following bound:
\begin{align*}
    \sum_{v \in V} |F_v| &\le \sum_{1 \le a+b \le D} \sum_{h \in V} |VC_{a,b}^h|\\ 
    &\le \sum_{1 \le a+b \le D} \sum_{h \in V} 2|MM_{a,b}^h|\\ 
    &\le \sum_{1 \le a+b \le D} \sum_{c \in [1,D^3]} 2 |G_{a,b}^c|\\ 
    &= \bigo(D^5 \frac{n^2}{\RS(n)}). 
\end{align*}
\end{proof}

Note that the construction used in proof of Theorem~\ref{th:upperbound} generalizes to edges with $\{0,1\}$ weights since any path length $c \leq D$ still decomposes in at most $D$ sums $c=a+b$ of two path lengths.
We now note that if $G$ is a graph of constant \emph{average degree}, that is $m/n = \bigo(1)$, then we can reduce construction of hub labeling to constant \emph{max degree} case as follow. First, we construct $G'$ by subdividing any vertex $v \in V(G)$ of degree $\textrm{deg}(v)$ into $\lceil \frac{\textrm{deg}(v)}{\lceil m/n \rceil} \rceil$ vertices of degree at most $2+\lceil m/n \rceil$, using a path with weight-0 auxiliary edges for linking them (treating all the non-auxiliary edges as weight-1 edges). We have then $|E(G')| = \bigo(m)$ and $|V(G')| = \bigo(m)$. We note that for any $v' \in V(G')$ there is $v \in V(G)$ that originated $v'$, and for any $v \in V(G)$ we can pick one $v' \in V(G')$ as a representative of $v$ in $G'$. Having constructed hub labeling for $G'$, denote it $\{H'_v\}$, we construct a hubset of $v \in V(G)$ by taking hubset of its representative in $G'$ and projecting back each selected hub in $G'$ to its original vertex in $G$. We thus have reached our main result on the side of upper bounds.

\mbox{}\newline\noindent\textbf{Theorem \ref*{th:ourupperbound}.}
\emph{
\ourupperbound
}


\bibliographystyle{alpha}
\bibliography{bib}
\end{document}